\numberwithin{equation}{section}
\newtheorem{Theorem}{Theorem}[section]
\newtheorem{Proposition}[Theorem]{Proposition}
 { \theoremstyle{definition}
\newtheorem{Remark}[Theorem]{Remark} }
\begin{document}
	
\allowdisplaybreaks

\newcommand{\arXivNumber}{2403.05464}

\renewcommand{\PaperNumber}{049}

\FirstPageHeading

\ShortArticleName{Generalized Yang Poisson Models on Canonical Phase Space}

\ArticleName{Generalized Yang Poisson Models\\ on Canonical Phase Space}	

\Author{Tea MARTINI\'{C} BILA\'{C}~$^{\rm a}$, Stjepan MELJANAC~$^{\rm b}$ and Salvatore MIGNEMI~$^{\rm cd}$}

\AuthorNameForHeading{T.~Martini\'{c} Bila\'{c}, S.~Meljanac and S.~Mignemi}

\Address{$^{\rm a)}$~Faculty of Science, University of Split, Rudera Bo\v{s}kovi\'ca 33, 21000 Split, Croatia}
\EmailD{\href{mailto:rina.teamar@pmfst.hr}{teamar@pmfst.hr}}

\Address{$^{\rm b)}$~Division of Theoretical Physics, Ruder Bo\v{s}kovi\'c Institute,\\
\hphantom{$^{\rm b)}$}~Bijeni\v{c}ka cesta 54, 10002 Zagreb, Croatia}
\EmailD{\href{mailto:meljanac@irb.hr}{meljanac@irb.hr}}

 \Address{$^{\rm c)}$~Dipartimento di Matematica, Universit\`a di Cagliari, via Ospedale 72, 09124 Cagliari, Italy}
\EmailD{\href{mailto:smignemi@unica.it}{smignemi@unica.it}}

\Address{$^{\rm d)}$~INFN, Sezione di Cagliari Cittadella Universitaria, 09042 Monserrato, Italy}

\ArticleDates{Received March 12, 2024, in final form May 29, 2024; Published online June 10, 2024}

\Abstract{We discuss the generalized Yang Poisson models. We construct generalizations of the Yang Poisson algebra related to $\mathfrak{o}(1,5)$ algebra discussed by Meljanac and Mignemi~(2023). The exact realizations of this generalized algebra on canonical phase space are presented and the corresponding differential equations are solved in simple cases. Furthermore, we discuss the Poisson algebras related to $\mathfrak{o}(3,3)$ and $\mathfrak{o}(2,4)$ algebras.}	
	
\Keywords{Yang Poisson model; generalized Yang Poisson model; realizations}

\Classification{81R60}

\section{Yang Poisson model}\label{sec-1}

The Yang model of noncommutative geometry on a curved background spacetime was proposed in \cite{yang} and is a generalization of the noncommutative geometry first introduced by Snyder \cite{Snyder}. This model is defined in terms of a 15-parameter algebra isomorphic to $\mathfrak{o}(1, 5)$ containing the generators of the Lorentz algebra with the coordinates of phase spaces, and was generalized in~\cite{Khruschev}. Later, Snyder--de Sitter or triply special relativity model, where this symmetry is realized nonlinearly, was proposed in \cite{Banerjee, Guo, Kowalski, Mignemi-1}.
Realizations of the Yang model have been discussed in \cite{Lukierski, Martinic-1, Meljanac-1, Meljanac-2, Meljanac-3}. They cannot be obtained in closed analytic form, but only as power series in coupling constants. In the classical limit, where quantum commutators are replaced by Poisson brackets, the algebra has a simpler structure and no ordering problems arise. The corresponding realizations of this Yang Poisson model can be obtained in an exact form \cite{Meljanac-4}.

The Yang Poisson model is generated by a Poisson algebra containing the usual Lorentz algebra of generators $M_{\mu\nu}$, with its standard action on phase space parametrized by $\hat{x}_{\mu}$ and $\hat{p}_{\mu}$,
\begin{gather}\label{0.1}
\bigl\lbrace M_{\mu\nu},M_{\rho\sigma}\bigr\rbrace =\bigl( \eta_{\mu\rho}M_{\nu\sigma}-\eta_{\mu\sigma}M_{\nu\rho}-\eta_{\nu\rho}M_{\mu\sigma}+\eta_{\nu\sigma}M_{\mu\rho}\bigr),
\\ \label{0.2}
\bigl\lbrace M_{\mu\nu}, \hat{x}_{\lambda}\bigr\rbrace =\bigl( \eta_{\mu\lambda}\hat{x}_{\nu}-\eta_{\nu\lambda}\hat{x}_{\mu}\bigr) ,\qquad \bigl\lbrace M_{\mu\nu}, \hat{p}_{\lambda}\bigr\rbrace =\bigl( \eta_{\mu\lambda}\hat{p}_{\nu}-\eta_{\nu\lambda}\hat{p}_{\mu}\bigr),
\\ \label{0.3}
\bigl\lbrace \hat{x}_{\mu},\hat{x}_{\nu}\bigr\rbrace =\beta^{2}M_{\mu\nu}, \qquad \bigl\lbrace \hat{p}_{\mu},\hat{p}_{\nu}\bigr\rbrace =\alpha^{2}M_{\mu\nu},
\\ \label{0.4}
\bigl\lbrace \hat{x}_{\mu}, \hat{p}_{\nu}\bigr\rbrace =\eta_{\mu\nu}h,
\\ \label{1.7}
\bigl\lbrace h, \hat{x}_{\mu}\bigr\rbrace =\beta^{2}\hat{p}_{\mu}, \qquad \bigl\lbrace h, \hat{p}_{\mu}\bigr\rbrace =-\alpha^{2}\hat{x}_{\mu},
\\ \label{0.5}
\bigl\lbrace M_{\mu\nu}, h \bigr\rbrace =0,
\end{gather}
where $\lbrace ,\rbrace $ are classical Poisson brackets, $\mu,\nu=0,\ldots, n-1$, $ \eta_{\mu\nu}={\rm diag}(-1,1,\ldots,1)$ 
is the Minkowski metric and $\alpha$ and $\beta$ are real parameters.
Generators denoted by $A$, $B$, $C$ of this Poisson algebra satisfy Jacobi relations of type
\begin{equation*}
 \lbrace A, \lbrace B, C \rbrace \rbrace + \lbrace B, \lbrace C, A \rbrace \rbrace+ \lbrace C, \lbrace A, B \rbrace \rbrace=0.
\end{equation*}
In the limit when $\alpha \rightarrow 0$ the algebra \eqref{0.1}--\eqref{0.5} becomes the Snyder Poisson algebra and in the limit when $\beta \rightarrow 0$, it becomes the dual Snyder Poisson algebra related to the de Sitter algebra~$\mathfrak{o}(1,5),$ for $n=4, $ and $ \mu,\nu=0,1,2,3$.

We look for realizations of $\hat{x}_{\mu}$ and $\hat{p}_{\mu}$ on a phase space with coordinates $x_{\mu}$ and momenta~$p_{\mu}$ satisfying the canonical algebra
\begin{equation*}
\bigl\lbrace x_{\mu},x_{\nu}\bigr\rbrace =\bigl\lbrace p_{\mu},p_{\nu}\bigr\rbrace =0, \qquad \bigl\lbrace x_{\mu},p_{\nu}\bigr\rbrace =\eta_{\mu\nu}.
\end{equation*}
Realizations of the Yang Poisson algebra on a canonical phase space of coordinates $x_{\mu}$ and $p_{\mu}$, with the Lorentz algebra generators realized as $M_{\mu\nu}=x_{\mu}p_{\nu}-x_{\nu}p_{\mu}$ were analyzed and discussed in \cite{Meljanac-4}.
Special solution for realizations of $\hat{x}_{\mu}$ and $\hat{p}_{\mu}$ are given by
\begin{gather}\label{1}
\hat{x}_{\mu}=x_{\mu}\sqrt{1-\beta^{2}p^{2}+\phi_{1}(z)},
\\ \label{1.1}
\hat{p}_{\mu}=p_{\mu}\sqrt{1-\alpha^{2}x^{2}+\phi_{2}(z)},
\end{gather}
where
\begin{gather}
\phi_{1}\phi_{2}+\phi_{1}+\phi_{2}=z^{2}, \qquad z=\alpha\beta ( xp),\nonumber\\
\label{1.5}
h=\sqrt{\bigl( 1-\beta^{2}p^{2}+\phi_{1}(z)\bigr) \bigl( 1-\alpha^{2}x^{2}+\phi_{2}(z)\bigr) }.
\end{gather}
Generally, it holds that in terms of original variables $\hat x_{\mu}$ and $\hat p_{\mu}$,
\begin{equation}\label{1.6}
h=\sqrt{1-\alpha^{2}\hat{x}^{2}-\beta^{2}\hat{p}^{2}-\frac{\alpha^{2}\beta^{2}}{2}M^{2}}.
\end{equation}
This result for $h$ is universal and generally valid in the Yang Poisson model \cite{Meljanac-4}. The most general realizations are obtained using the group of automorphisms applied to the special solution defined with $\phi_{2}(z)=0$ \cite{Meljanac-4}.

In the present paper, we shall consider generalized Yang Poisson models and their related models, as well as their exact realizations.
In Section \ref{sec-2}, we define generalizations of the Yang Poisson algebra \eqref{0.1}--\eqref{0.5} and construct realizations of this algebra solving the corresponding differential equations, presenting the general results in Section \ref{sec-3}. Furthermore, we discuss the related Poisson algebras in Section \ref{sec-4}.

\section{Generalized Yang Poisson models}\label{sec-2}

The most general new generators linear in $\hat{x}_{\mu}$, $\hat{p}_{\mu}$ and $M_{\mu\nu}$ are
\begin{gather}\label{x-2}
\tilde{X}_{\mu}=A\biggl( \cos \varphi \hat{x}_{\mu}+\frac{\beta}{\alpha}\sin \varphi \hat{p}_{\mu}\biggr) +\beta a_{\nu}M_{\mu\nu},
\\ \label{p-2}
\tilde{P}_{\mu}=B\biggl( \cos \psi \hat{p}_{\mu}+\frac{\alpha}{\beta}\sin \psi \hat{x}_{\mu}\biggr) +\alpha b_{\nu}M_{\mu\nu},
\end{gather}
where $M_{\mu\nu}=x_{\mu}p_{\nu}-x_{\nu}p_{\mu}$ is unchanged and $A$, $B$, $\varphi$, $\psi$, $a_{\mu}$, $b_{\mu}$ are dimensionless parameters with~$AB\neq 0$.

The inverse transformations are
\begin{gather*}
\hat{x}_{\mu}=\frac{A^{-1}\alpha\cos \psi\bigl(\tilde{X}_{\mu}-\beta a_{\nu}M_{\mu\nu}\bigr) -B^{-1}\beta\sin \varphi\bigl(\tilde{P}_{\mu}-\alpha b_{\nu}M_{\mu\nu}\bigr) }{\alpha \cos ( \varphi +\psi) },
\\
\hat{p}_{\mu}=\frac{B^{-1}\beta\cos \varphi\bigl(\tilde{P}_{\mu}-\alpha b_{\nu}M_{\mu\nu}\bigr)-A^{-1}\alpha\sin \psi\bigl(\tilde{X}_{\mu}-\beta a_{\nu}M_{\mu\nu}\bigr)}{\beta \cos ( \varphi +\psi)}
.\end{gather*}
The new generators $\tilde{X}_{\mu}$ and $\tilde{P}_{\mu}$ generate a new class of Poisson algebras isomorphic to the initial Yang Poisson algebra. The new algebra generated by $\tilde{X}_{\mu}$, $\tilde{P}_{\mu}$, $M_{\mu\nu}$ and $\tilde{h}$ is given by
\begin{gather}\label{1.10}
\bigl\lbrace \tilde{X}_{\mu}, \tilde{X}_{\nu}\bigr\rbrace =\bigl( \beta^{2}\tilde{A}M_{\mu\nu}+\beta\bigl( a_{\mu}\tilde{X}_{\nu}-a_{\nu}\tilde{X}_{\mu} \bigr)\bigr) ,
\\ \label{1.11}
\bigl\lbrace \tilde{P}_{\mu}, \tilde{P}_{\nu}\bigr\rbrace =\bigl( \alpha^{2}\tilde{B}M_{\mu\nu}+\alpha\bigl( b_{\mu}\tilde{P}_{\nu}-b_{\nu}\tilde{P}_{\mu}\bigr)\bigr) ,
\\ \label{1.12}
\bigl\lbrace \tilde{X}_{\mu}, \tilde{P}_{\nu}\bigr\rbrace=\bigl( \eta_{\mu\nu}\tilde{h}+\alpha b_{\mu}\tilde{X}_{\nu}-\beta a_{\nu}\tilde{P}_{\mu}+\alpha\beta AB\tilde{\rho}M_{\mu\nu}\bigr),
\\ \label{1.13}
\bigl\lbrace M_{\mu\nu},\tilde{X}_{\lambda}\bigr\rbrace =\bigl( \eta_{\mu\lambda}\tilde{X}_{\nu}-\eta_{\nu\lambda}\tilde{X}_{\mu}+\beta\bigl( a_{\mu}M_{\lambda\nu}-a_{\nu}M_{\lambda\mu}\bigr) \bigr),
\\ \label{1.14}
\bigl\lbrace M_{\mu\nu},\tilde{P}_{\lambda}\bigr\rbrace =\bigl( \eta_{\mu\lambda}\tilde{P}_{\nu}-\eta_{\nu\lambda}\tilde{P}_{\mu}+\alpha\bigl( b_{\mu}M_{\lambda\nu}-b_{\nu}M_{\lambda\mu}\bigr) \bigr),
\\ \label{1.15}
\bigl\lbrace M_{\mu\nu},\tilde{h}\bigr\rbrace =\alpha\bigl( b_{\nu}\tilde{X}_{\mu}-b_{\mu}\tilde{X}_{\nu}\bigr) -\beta\bigl( a_{\nu}\tilde{P}_{\mu}-a_{\mu}\tilde{P}_{\nu}\bigr),
\\ \label{2}
\bigl\lbrace \tilde{h}, \tilde{X}_{\mu}\bigr\rbrace =\beta^{2}\tilde{A}\tilde{P}_{\mu}-\alpha\beta AB\tilde{\rho}\tilde{X}_{\mu}-\beta a_{\mu}\tilde{h},
\\ \label{2.1}
\bigl\lbrace \tilde{h}, \tilde{P}_{\mu}\bigr\rbrace=-\bigl( \alpha^{2}\tilde{B}\tilde{X}_{\mu}-\alpha\beta AB\tilde{\rho}\tilde{P}_{\mu}-\alpha b_{\mu}\tilde{h}\bigr),
\end{gather}
where
\begin{gather}\label{1.16}
\tilde{h}=AB\cos(\varphi+\psi)h+\beta a\tilde{P}-\alpha b \tilde{X}-\alpha\beta a_{\mu}b_{\nu}M_{\mu\nu},
\\
\label{1.17}
\tilde{\rho}=\sin(\varphi +\psi)+\frac{ab}{AB}, \qquad \tilde{A}=A^{2}+a^{2} \qquad \text{and} \qquad \tilde{B}=B^{2}+b^{2}.
\end{gather}
The algebra \eqref{1.10}--\eqref{1.17} is invariant under Born duality \cite{Born},
$\alpha \leftrightarrow \beta$, $a_{i}\rightarrow -b_{i}$, $b_{i}\rightarrow a_{i}$, $\tilde{A} \leftrightarrow \tilde{B}$, $\tilde{\rho} \leftrightarrow -\tilde{\rho}$, $\tilde{X}_{\mu} \rightarrow -\tilde{P}_{\mu}$, $\tilde{P}_{\mu} \rightarrow \tilde{X}_{\mu}$, $M_{\mu\nu}\leftrightarrow M_{\mu\nu}$, $\tilde{h} \leftrightarrow \tilde{h}$.

In the following, we consider the case where $a_{\mu}=b_{\mu}=0$.
In this case, the Poisson algebra~\mbox{\eqref{1.10}--\eqref{1.17}} corresponds to the classical limit of the Khruschev--Leznov algebra \cite{Khruschev}, where quantum commutators are replaced by Poisson brackets. Let us denote $\tilde{\alpha}^{2}=\alpha^{2}\tilde{B}$ and $\tilde{\beta}^{2}=\beta^{2}\tilde{A}$. Then we have
\begin{equation*}
\tilde{h}=AB\cos(\varphi+\psi)\sqrt{1-\tilde{\alpha}^{2}\tilde{X}^{2}-
	\tilde{\beta}^{2}\tilde{P}^{2}+2\tilde{\rho}\tilde{\alpha}\tilde{\beta}\tilde{X}\tilde{P}-\frac{\tilde{\alpha}^{2}\tilde{\beta}^{2}}{2A^{2}B^{2}}M^{2}},
\end{equation*}
where
\begin{equation*}
\tilde{\rho}=\sin(\varphi +\psi), \qquad \tilde{A}=A^{2} \qquad \text{and} \qquad \tilde{B}=B^{2}.
\end{equation*}
Using the realizations \eqref{1} and \eqref{1.1} of $\hat{x}_{\mu}$ and $\hat{p}_{\mu}$, we can write realization of $\tilde{X}_{\mu}$, $\tilde{P}_{\mu}$ in terms of $x_{\mu}$ and $p_{\mu}$,
\begin{gather}\label{1.3}
\tilde{X}_{\mu}=x_{\mu}A\cos(\varphi)\sqrt{1-\frac{\tilde{\beta}^{2}}{A^{2}}p^{2}+\varphi_{1}(z)}+\frac{\tilde{\beta}}{\tilde{\alpha}}p_{\mu}B\sin(\varphi)\sqrt{1-\frac{\tilde{\alpha}^{2}}{B^{2}}x^{2}+\varphi_{2}(z)},
\\ \label{1.4}
\tilde{P}_{\mu}=p_{\mu}B\cos(\psi)\sqrt{1-\frac{\tilde{\alpha}^{2}}{B^{2}}x^{2}+\varphi_{2}(z)}+\frac{\tilde{\alpha}}{\tilde{\beta}}x_{\mu}A\sin(\varphi)\sqrt{1-\frac{\tilde{\beta}^{2}}{A^{2}}p^{2}+\varphi_{1}(z)}
.\end{gather}

Generalizing \eqref{1.3} and \eqref{1.4}, we get the general ansatz for $\tilde{X}_{\mu}$ and $\tilde{P}_{\mu}$
\begin{equation}\label{2.18}
\tilde{X}_{\mu}=x_{\mu}f+\frac{\tilde{\beta}}{\tilde{\alpha}}p_{\mu}g
\end{equation}
and
\begin{equation}\label{2.19}
\tilde{P}_{\mu}=p_{\mu}\tilde{f}+\frac{\tilde{\alpha}}{\tilde{\beta}}x_{\mu}\tilde{g},
\end{equation}
where $f$, $g$, $\tilde{f}$, $\tilde{g}$ are functions of $u$, $v$, $z$, with $u=\tilde{\beta}^{2}p^{2}$, $v=\tilde{\alpha}^{2}x^{2}$, $z=\frac{\tilde{\alpha}\tilde{\beta}}{AB}( xp)$.

From $\bigl\lbrace \tilde{X}_{\mu}, \tilde{X}_{\nu}\bigr\rbrace =\tilde{ \beta^{2}}M_{\mu\nu}$, we get
\begin{align}
&-2f\frac{\partial f}{\partial u}-2g\frac{\partial g}{\partial v}+4z\biggl( \frac{\partial f}{\partial v}\frac{\partial g}{\partial u}-
\frac{\partial f}{\partial u}\frac{\partial g}{\partial v}\biggr) +2v\biggl( \frac{\partial f}{\partial v}\frac{\partial g}{\partial z}-\frac{\partial f}{\partial z}\frac{\partial g}{\partial v}\biggr) \notag\\
&\qquad{}+2u\biggl( \frac{\partial f}{\partial z}\frac{\partial g}{\partial u}-\frac{\partial f}{\partial u}\frac{\partial g}{\partial z}\biggr)+f\frac{\partial g}{\partial z}+g\frac{\partial f}{\partial z}
=1 \label{e-1}
\end{align}
and from $\bigl\lbrace \tilde{P}_{\mu}, \tilde{P}_{\nu}\bigr\rbrace =\tilde{\alpha}^{2}M_{\mu\nu}$ it follows
\begin{align}
&-2\tilde{f}\frac{\partial \tilde{f}}{\partial v}-2\tilde{g}\frac{\partial \tilde{g}}{\partial u}+4z\biggl( \frac{\partial \tilde{f}}{\partial u}\frac{\partial \tilde{g}}{\partial v}-
\frac{\partial \tilde{f}}{\partial v}\frac{\partial \tilde{g}}{\partial u}\biggr) +2v\biggl( \frac{\partial \tilde{f}}{\partial z}\frac{\partial \tilde{g}}{\partial v}-\frac{\partial \tilde{f}}{\partial v}\frac{\partial \tilde{g}}{\partial z}\biggr) \notag\\
&\qquad{}+2u\biggl( \frac{\partial \tilde{f}}{\partial u}\frac{\partial \tilde{g}}{\partial z}-\frac{\partial \tilde{f}}{\partial z}\frac{\partial \tilde{g}}{\partial u}\biggr)+\tilde{f}\frac{\partial \tilde{g}}{\partial z}+\tilde{g}\frac{\partial \tilde{f}}{\partial z}
=1.\label{e-2}
\end{align}
The relation
$\bigl\lbrace \tilde{X}_{\mu}, \tilde{P}_{\nu}\bigr\rbrace= \eta_{\mu\nu}\tilde{h}+\tilde{\alpha}\tilde{\beta} \tilde{\rho}M_{\mu\nu}$ yields following five equations:
\begin{align} \label{e-3}
&f\tilde{f}-g\tilde{g}=\tilde{h},
\\
&2\tilde{f}\frac{\partial f}{\partial v}-2g\frac{\partial \tilde{g}}{\partial v}+4z\biggl( \frac{\partial f}{\partial v}\frac{\partial \tilde{g}}{\partial u}-
\frac{\partial f}{\partial u}\frac{\partial \tilde{g}}{\partial v}\biggr) +2v\biggl( \frac{\partial f}{\partial v}\frac{\partial \tilde{g}}{\partial z}-\frac{\partial f}{\partial z}\frac{\partial \tilde{g}}{\partial v}\biggr) \notag \\
&\qquad{}+2u\biggl( \frac{\partial f}{\partial z}\frac{\partial \tilde{g}}{\partial u}-\frac{\partial f}{\partial u}\frac{\partial \tilde{g}}{\partial z}\biggr)+f\frac{\partial \tilde{g}}{\partial z}-\tilde{g}\frac{\partial f}{\partial z}
=0,\label{e-4}
\\
&-2f\frac{\partial \tilde{f}}{\partial u}-2\tilde{g}\frac{\partial g}{\partial u}+4z\biggl( \frac{\partial \tilde{f}}{\partial u}\frac{\partial g}{\partial v}-
\frac{\partial \tilde{f}}{\partial v}\frac{\partial g}{\partial u}\biggr) +2v\biggl( \frac{\partial \tilde{f}}{\partial z}\frac{\partial g}{\partial v}-\frac{\partial \tilde{f}}{\partial v}\frac{\partial g}{\partial z}\biggr) \notag \\
&\qquad{}+2u\biggl( \frac{\partial \tilde{f}}{\partial u}\frac{\partial g}{\partial z}-\frac{\partial \tilde{f}}{\partial z}\frac{\partial g}{\partial u}\biggr)+\tilde{f}\frac{\partial g}{\partial z}-g\frac{\partial \tilde{f}}{\partial z}=0,\label{e-5}
\\
&-2\tilde{g}\frac{\partial f}{\partial u}-2g\frac{\partial \tilde{f}}{\partial v}+4z\biggl( \frac{\partial f}{\partial v}\frac{\partial \tilde{f}}{\partial u}-
\frac{\partial f}{\partial u}\frac{\partial \tilde{f}}{\partial v}\biggr) +2v\biggl( \frac{\partial f}{\partial v}\frac{\partial \tilde{f}}{\partial z}-\frac{\partial f}{\partial z}\frac{\partial \tilde{f}}{\partial v}\biggr)\notag\\
&\qquad{}+2u\biggl( \frac{\partial f}{\partial z}\frac{\partial \tilde{f}}{\partial u}-\frac{\partial f}{\partial u}\frac{\partial \tilde{f}}{\partial z}\biggr)+f\frac{\partial \tilde{f}}{\partial z}+\tilde{f}\frac{\partial f}{\partial z}
=\tilde{\rho},\label{e-6}
\\
&2f\frac{\partial \tilde{g}}{\partial u}+2\tilde{f}\frac{\partial g}{\partial v}+4z\biggl( \frac{\partial g}{\partial v}\frac{\partial \tilde{g}}{\partial u}-
\frac{\partial g}{\partial u}\frac{\partial \tilde{g}}{\partial v}\biggr) +2v\biggl( \frac{\partial g}{\partial v}\frac{\partial \tilde{g}}{\partial z}-\frac{\partial g}{\partial z}\frac{\partial \tilde{g}}{\partial v}\biggr) \notag \\
&\qquad{}+2u\biggl( \frac{\partial g}{\partial z}\frac{\partial \tilde{g}}{\partial u}-\frac{\partial g}{\partial u}\frac{\partial \tilde{g}}{\partial z}\biggr)-g\frac{\partial \tilde{g}}{\partial z}-\tilde{g}\frac{\partial g}{\partial z}
=-\tilde{\rho}.\label{e-7}
\end{align}
Note that comparing $\eqref{1.3}$ and $\eqref{1.4}$ with \eqref{2.18} and \eqref{2.19}, it follows that
\begin{equation}\label{2.27}
f=A\cos(\varphi)\sqrt{1-\frac{u}{A^{2}}+\varphi_{1}(z)}, \qquad g=B\sin(\varphi)\sqrt{1-\frac{v}{B^{2}}+\varphi_{2}(z)}
\end{equation}
and
\begin{equation}\label{2.28}
\tilde{f}=B\cos(\psi)\sqrt{1-\frac{v}{B^{2}}+\varphi_{2}(z)}, \qquad \tilde{g}=A\sin(\varphi)\sqrt{1-\frac{u}{A^{2}}+\varphi_{1}(z)}.
\end{equation}
We have checked that functions $f$, $g$, $\tilde {f}$, $\tilde {g}$ in \eqref{2.27} and \eqref{2.28} are solutions of the seven differential equations \eqref{e-1}--\eqref{e-7}. Also, it is easy to verify that the special solutions \eqref{1} and \eqref{1.1} satisfy the differential equations \eqref{e-1}--\eqref{e-7}.

In particular, for $\tilde{f}=\sqrt{1-v}$, $\tilde{g}=0$ and $g\equiv g(v)$, from equations \eqref{e-1}--\eqref{e-7} we find that
\begin{equation*}
f=\sqrt{\bigl( 1-\tilde{\rho}^{2}\bigr)\bigl( 1-u+z^{2}\bigr) }, \qquad g=\tilde{\rho}\sqrt{1-v},\qquad
h=\sqrt{\bigl( 1-\tilde{\rho}^{2}\bigr)\bigl( 1-u-v+uv+z^{2}-vz^{2}\bigr) }.
\end{equation*}

In the case, where $h=\tilde{h}$, we have
\begin{equation*}
AB\cos( \varphi +\psi) =1 \qquad \text{and} \qquad 1-\tilde{\rho}^{2}=\frac{1}{A^{2}B^{2}},
\end{equation*}
which implies that
\begin{equation*}
\tilde{h}=\sqrt{1-\tilde{\alpha}^{2}\tilde{X}^{2}-
	\tilde{\beta}^{2}\tilde{P}^{2}+2\tilde{\rho}\tilde{\alpha}\tilde{\beta}\tilde{X}\tilde{P}-\frac{1-\tilde{\rho}^{2}}{2}\tilde{\alpha}^{2}\tilde{\beta}^{2}M^{2}}.
\end{equation*}

In the case when $\tilde{\rho}=1$, we find that $\cos( \varphi +\psi) =0$, which implies $\tilde{h}=0$ and then from~\eqref{2} and~\eqref{2.1} we have $\tilde{X}_{\mu}=\frac{\tilde{\beta}}{\tilde{\alpha}}\tilde{P}_{\mu}$.
Similarly, in the case when $\tilde{\rho}=-1$, we have $\tilde{h}=0$ and $\tilde{X}_{\mu}=-\frac{\tilde{\beta}}{\tilde{\alpha}}\tilde{P}_{\mu}$.

\section{General solution}\label{sec-3}

\begin{Proposition}
If $\hat{\mathcal{L}}$ is an operator acting on the deformed phase space spanned by~$\hat{x}_{\mu}$ and~$\hat{p}_{\mu}$~as
\begin{gather*}
\hat{\mathcal{L}}\bigl( \hat{x}_{\mu}\bigr) =\biggl\lbrace \frac{1}{\alpha \beta}h, \hat{x}_{\mu}\biggr\rbrace=\frac{\beta}{\alpha}\hat{p}_{\mu},
\qquad
\hat{\mathcal{L}}\bigl( \hat{p}_{\mu}\bigr) =\biggl\lbrace \frac{1}{\alpha \beta}h, \hat{p}_{\mu}\biggr\rbrace=-\frac{\alpha}{\beta}\hat{x}_{\mu},
\end{gather*}
where $h$ is defined in \eqref{1.5} and \eqref{1.6},
then it holds
\begin{gather}\label{2.5}
\bigl( {\rm e}^{\varphi\hat{\mathcal{L}}}\bigr) \bigl( \hat{x}_{\mu}\bigr)=\hat{x}_{\mu}\cos{\varphi}+\frac{\beta}{\alpha}\hat{p}_{\mu}\sin{\varphi},
\\
\label{1.8}
\bigl( {\rm e}^{-\psi\hat{\mathcal{L}}}\bigr) \bigl( \hat{p}_{\mu}\bigr)=\hat{p}_{\mu}\cos{\psi}+\frac{\alpha}{\beta}\hat{x}_{\mu}\sin{\psi}.
\end{gather}
\end{Proposition}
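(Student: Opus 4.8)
The plan is to recognize that $\hat{\mathcal{L}}$, restricted to the two-dimensional space spanned for each fixed index $\mu$ by $\hat{x}_\mu$ and $\hat{p}_\mu$, acts as a complex structure (it squares to minus the identity), so that its exponential resums to an ordinary plane rotation. First I would note that, since the Poisson bracket is linear in its second argument, $\hat{\mathcal{L}}=\bigl\lbrace \tfrac{1}{\alpha\beta}h,\,\cdot\,\bigr\rbrace$ is a linear operator, and that the prescribed action (which is just \eqref{1.7} divided by $\alpha\beta$) maps this two-dimensional subspace into itself. Hence, applied to the generators, the formal exponential $e^{\varphi\hat{\mathcal{L}}}$ is well defined and cannot leak into higher-order phase-space functions.

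The key computation I would carry out is
\begin{align*}
\hat{\mathcal{L}}^{2}\bigl(\hat{x}_\mu\bigr)&=\hat{\mathcal{L}}\Bigl(\tfrac{\beta}{\alpha}\hat{p}_\mu\Bigr)=\tfrac{\beta}{\alpha}\Bigl(-\tfrac{\alpha}{\beta}\hat{x}_\mu\Bigr)=-\hat{x}_\mu,\\
\hat{\mathcal{L}}^{2}\bigl(\hat{p}_\mu\bigr)&=\hat{\mathcal{L}}\Bigl(-\tfrac{\alpha}{\beta}\hat{x}_\mu\Bigr)=-\tfrac{\alpha}{\beta}\cdot\tfrac{\beta}{\alpha}\hat{p}_\mu=-\hat{p}_\mu,
\end{align*}
so that $\hat{\mathcal{L}}^{2}=-\mathrm{Id}$ on the relevant subspace. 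I would then expand $e^{\varphi\hat{\mathcal{L}}}=\sum_{n\ge 0}\tfrac{\varphi^{n}}{n!}\hat{\mathcal{L}}^{n}$ and separate even from odd powers. Using $\hat{\mathcal{L}}^{2k}=(-1)^{k}\mathrm{Id}$ and $\hat{\mathcal{L}}^{2k+1}=(-1)^{k}\hat{\mathcal{L}}$, the two subseries collapse to the Taylor series of $\cos\varphi$ and $\sin\varphi$, giving the operator identity $e^{\varphi\hat{\mathcal{L}}}=\cos\varphi\,\mathrm{Id}+\sin\varphi\,\hat{\mathcal{L}}$ on this subspace. Applying it to $\hat{x}_\mu$ and inserting $\hat{\mathcal{L}}(\hat{x}_\mu)=\tfrac{\beta}{\alpha}\hat{p}_\mu$ yields \eqref{2.5}; for \eqref{1.8} I would replace $\varphi$ by $-\psi$, so that $e^{-\psi\hat{\mathcal{L}}}=\cos\psi\,\mathrm{Id}-\sin\psi\,\hat{\mathcal{L}}$, and apply this to $\hat{p}_\mu$ using $\hat{\mathcal{L}}(\hat{p}_\mu)=-\tfrac{\alpha}{\beta}\hat{x}_\mu$.

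I do not expect a genuine obstacle here. The one point requiring care is the identity $\hat{\mathcal{L}}^{2}=-\mathrm{Id}$, which relies on the precise coefficients $\beta^{2}$ and $-\alpha^{2}$ in \eqref{1.7} combining to $-1$, and on the invariance of the two-dimensional subspace that guarantees the formal exponential genuinely resums to a rotation rather than generating products of $\hat{x}_\mu$ and $\hat{p}_\mu$. Convergence of the resummed series is automatic, since it reduces to the everywhere-convergent expansions of cosine and sine.
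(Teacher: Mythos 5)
Your proposal is correct and follows essentially the same route as the paper: the paper expands ${\rm e}^{\varphi\hat{\mathcal{L}}}$ as a power series and proves by induction that the $2k$-fold and $(2k+1)$-fold nested brackets with $h$ give $(-1)^k(\alpha\beta)^{2k}\hat{x}_\mu$ and $(-1)^k\alpha^{2k}\beta^{2k+2}\hat{p}_\mu$, which is exactly your observation $\hat{\mathcal{L}}^{2}=-\mathrm{Id}$ on the span of $\hat{x}_\mu$, $\hat{p}_\mu$ written with the $\alpha\beta$ factors kept explicit. Your packaging via $\hat{\mathcal{L}}^{2k}=(-1)^{k}\mathrm{Id}$ and $\hat{\mathcal{L}}^{2k+1}=(-1)^{k}\hat{\mathcal{L}}$ is a marginally cleaner presentation of the identical resummation into $\cos\varphi$ and $\sin\varphi$.
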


\begin{proof}
First, we have that
\begin{gather}
\bigl( {\rm e}^{\varphi\hat{\mathcal{L}}}\bigr) \bigl( \hat{x}_{\mu}\bigr)=\hat{x}_{\mu}+ \frac{\varphi}{\alpha \beta}\bigl\lbrace h, \hat{x}_{\mu}\bigr\rbrace+\frac{1}{2!}\biggl( \frac{\varphi}{\alpha \beta}\biggr) ^{2}\bigl\lbrace h,\bigl\lbrace h, \hat{x}_{\mu}\bigr\rbrace \bigr\rbrace
+ \cdots \nonumber \\
\hphantom{\bigl( {\rm e}^{\varphi\hat{\mathcal{L}}}\bigr) \bigl( \hat{x}_{\mu}\bigr)=}{}
+ \frac{1}{n!}\biggl( \frac{\varphi}{\alpha \beta}\biggr) ^{n}
\underset{n \ \text{times}}{\underbrace{\bigl\lbrace h,\ldots,\bigl\lbrace h,\bigl\lbrace h,\hat{x}_{\mu}\bigr\rbrace\bigr\rbrace \ldots\bigr\rbrace }}+\cdots.\label{2.2}
\end{gather}
By induction on $n$ and using \eqref{1.7}, we prove the relations
\begin{gather}\label{2.3}
\underset{n=2k \ \text{times}}{\underbrace{\bigl\lbrace h,\ldots,\bigl\lbrace h,\bigl\lbrace h,\hat{x}_{\mu}\bigr\rbrace\bigr\rbrace \ldots\bigr\rbrace }}=( -1) ^{k}\alpha^{2k}\beta^{2k}\hat{x}_{\mu}, \qquad k=1,2,\ldots,
\\
\label{2.4}
\underset{n=2k+1 \ \text{times}}{\underbrace{\bigl\lbrace h,\ldots,\bigl\lbrace h,\bigl\lbrace h,\hat{x}_{\mu}\bigr\rbrace\bigr\rbrace \ldots\bigr\rbrace }}=( -1) ^{k}\alpha^{2k}\beta^{2k+2}\hat{p}_{\mu}, \qquad k=0,1,\ldots.
\end{gather}
For the case when $n=2k$, it is easy to see, using \eqref{1.7} that for $k=1$, we get
\begin{equation*}
\bigl\lbrace h,\bigl\lbrace h, \hat{x}_{\mu}\bigr\rbrace \bigr\rbrace =-\alpha^{2}\beta^{2}\hat{x}_{\mu}.
\end{equation*}
Assume that the relation \eqref{2.3} holds for some $k>1$. Then by the induction assumption and using \eqref{1.7}, we have
\begin{align*}
\underset{n=2k+2 \ \text{times}}{\underbrace{\bigl\lbrace h,\ldots,\bigl\lbrace h,\bigl\lbrace h,\hat{x}_{\mu}\bigr\rbrace\bigr\rbrace \ldots\bigr\rbrace }} =( -1) ^{k}\alpha^{2k}\beta^{2k}\bigl\lbrace h,\bigl\lbrace h, \hat{x}_{\mu}\bigr\rbrace \bigr\rbrace =( -1) ^{k+1}\alpha^{2( k+1) }\beta^{2( k+1) }\hat{x}_{\mu}.
\end{align*}
Similarly, by induction on $n$ and using \eqref{1.7}, we prove that relation \eqref{2.4} holds. Now, inserting relations \eqref{2.3} and \eqref{2.4} in \eqref{2.2}, we finally prove that \eqref{2.5} holds. Also, in a similar way, we prove~\eqref{1.8}.
\end{proof}

Let us now define an operator $\mathcal{L}_{G}$ acting on a canonical phase space spanned by $x_{\mu}$ and~$p_{\mu}$~as
\begin{gather*}
\mathcal{L}_{G} ( f ) = \lbrace G,f \rbrace,
\end{gather*}
where $G(x,p)$ and $f(x,p)$ are functions on classical phase space.
Furthermore, we define
\begin{gather*}
\hat{x}_{\mu}^{(0)}=\sqrt{1-\beta^{2}p^{2}}x_{\mu},
\qquad
\hat{p}_{\mu}^{(0)}=\sqrt{1-\alpha^{2}x^{2}}p_{\mu}
\end{gather*}
and then construct an operator $\mathcal{O}$ such that
\begin{equation}\label{2.6}
\bigl\lbrace \mathcal{O}\bigl( \hat{x}_{\mu}^{(0)}\bigr), \hat{p}_{\nu}^{(0)}\bigr\rbrace =\eta_{\mu\nu}h,
\end{equation}
where $h$ is given in \eqref{1.5}. The general structure of an operator $\mathcal{O}$ acting on $\hat{x}_{\mu}^{(0)}$ is
\begin{gather*}
\mathcal{O}\bigl( \hat{x}_{\mu}^{(0)}\bigr)=\bigl( {\rm e}^{\mathcal{L}_{G}}\bigr) \bigl( \hat{x}_{\mu}^{(0)}\bigr)=\hat{x}_{\mu}^{(0)}+ \bigl\lbrace G, \hat{x}_{\mu}^{(0)}\bigr\rbrace+\frac{1}{2!}\bigl\lbrace G,\bigl\lbrace G, \hat{x}_{\mu}^{(0)}\bigr\rbrace \bigr\rbrace \\
\hphantom{\mathcal{O}\bigl( \hat{x}_{\mu}^{(0)}\bigr)=\bigl( {\rm e}^{\mathcal{L}_{G}}\bigr) \bigl( \hat{x}_{\mu}^{(0)}\bigr)=}{}
+ \frac{1}{3!} \bigl\lbrace G\bigl\lbrace G,\bigl\lbrace G, \hat{x}_{\mu}^{(0)}\bigr\rbrace \bigr\rbrace \bigr\rbrace + \cdots	.
\end{gather*}
Solving perturbatively \eqref{2.6}, a unique solution was found for $G$ in \cite{Meljanac-4},
\begin{equation}\label{2.7}
G=\sum_{n=1}^{\infty}\alpha^{2n}\beta^{2n}g_{2n},
\end{equation}
where
\begin{equation*}
g_{2n}=\frac{ ( -1 ) ^{n}\cdot ( xp ) ^{2n+1}}{2n\cdot ( 2n+1 ) }.
\end{equation*}
The summation of the equation \eqref{2.7} for $G$ gives an exact result,
\begin{equation*}
G=\frac{1}{\alpha\beta}\biggl( z\biggl( 1-\frac{1}{2}\ln \bigl( 1+z^{2}\bigr) \biggr) -\arctan z\biggr).
\end{equation*}

If we fix $\tilde{P}_{\mu}^{(0)}=\hat{p}_{\mu}^{(0)}=\sqrt{1-\alpha^{2}x^{2}}p_{\mu}$, then we obtain the corresponding
\begin{gather*}
\tilde{X}_{\mu}^{(1)}=A\bigl( {\rm e}^{\varphi\hat{\mathcal{L}}}\circ
{\rm e}^{\mathcal{L}_{G}}\bigr) \bigl( \hat{x}_{\mu}^{(0)}\bigr) =A\bigl( {\rm e}^{\varphi\hat{\mathcal{L}}}\circ
	{\rm e}^{\mathcal{L}_{G}}\bigr) \bigl( \sqrt{1-\beta^{2}p^{2}}x_{\mu}\bigr),
\\
h^{(1,0)}=\sqrt{\left( 1-\beta^{2}p^{2}+z^{2}\right) \left( 1-\alpha^{2}x^{2}\right)}
\end{gather*}
with the property
\begin{equation*}
\bigl\lbrace \tilde{X}_{\mu}^{(1)}, \tilde{P}_{\mu}^{(0)}\bigr\rbrace =\eta_{\mu\nu}A\cos ( \varphi ) h^{(1,0)}+A \sin ( \varphi ) \alpha \beta M_{\mu\nu}.
\end{equation*}

The composition of mappings ${\rm e}^{\varphi\hat{\mathcal{L}}}\circ
{\rm e}^{\mathcal{L}_{G}}$ can be calculated perturbatively applying the BCH formula.
\begin{Proposition}
Application of the BCH formula to a Poisson algebra with the classical Poisson brackets gives
\begin{equation}\label{3}
{\rm e}^{\mathcal{L}_{A}}\circ
{\rm e}^{\mathcal{L}_{B}}={\rm e}^{\mathcal{L}_{C}},
\end{equation}
where $C=A+B+\frac{1}{2} \lbrace A,B \rbrace +\frac{1}{12} \lbrace A, \lbrace A,B \rbrace \rbrace -\frac{1}{12} \lbrace B, \lbrace A,B \rbrace \rbrace +\cdots $.
\end{Proposition}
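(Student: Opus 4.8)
The plan is to recognize the two exponentiated objects $e^{\mathcal{L}_A}$ and $e^{\mathcal{L}_B}$ as flows of Hamiltonian vector fields, and to reduce the statement to the ordinary operator BCH formula by showing that the assignment $G\mapsto\mathcal{L}_G$ is a homomorphism of Lie algebras. Concretely, $G\mapsto\mathcal{L}_G=\{G,\,\cdot\,\}$ sends the Poisson algebra (with the Poisson bracket as its Lie bracket) into the Lie algebra of linear operators on functions (with the operator commutator). Once this homomorphism property is in hand, every nested commutator appearing in the classical BCH series for operators translates into the corresponding nested Poisson bracket, which is exactly the series defining $C$.

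First I would establish the key identity
\begin{equation*}
[\mathcal{L}_A,\mathcal{L}_B]=\mathcal{L}_{\{A,B\}},
\end{equation*}
where the left-hand bracket denotes the operator commutator $\mathcal{L}_A\mathcal{L}_B-\mathcal{L}_B\mathcal{L}_A$. Acting on an arbitrary function $f$ on phase space and expanding the definitions gives $[\mathcal{L}_A,\mathcal{L}_B](f)=\{A,\{B,f\}\}-\{B,\{A,f\}\}$. Using antisymmetry and the Jacobi identity stated in Section~\ref{sec-1}, namely $\{A,\{B,f\}\}+\{B,\{f,A\}\}+\{f,\{A,B\}\}=0$, the right-hand side collapses to $\{\{A,B\},f\}=\mathcal{L}_{\{A,B\}}(f)$. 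Since $f$ is arbitrary and $G\mapsto\mathcal{L}_G$ is manifestly linear in $G$, this proves both linearity and bracket-compatibility, so the map is a Lie algebra homomorphism.

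Next I would apply the standard BCH formula to the operators $X=\mathcal{L}_A$ and $Y=\mathcal{L}_B$, obtaining $e^{\mathcal{L}_A}\circ e^{\mathcal{L}_B}=e^{D}$ with $D=\mathcal{L}_A+\mathcal{L}_B+\tfrac12[\mathcal{L}_A,\mathcal{L}_B]+\tfrac1{12}[\mathcal{L}_A,[\mathcal{L}_A,\mathcal{L}_B]]-\tfrac1{12}[\mathcal{L}_B,[\mathcal{L}_A,\mathcal{L}_B]]+\cdots$. Because every term of this series is an iterated commutator of $\mathcal{L}_A$ and $\mathcal{L}_B$, the homomorphism property lets me rewrite it termwise: $\mathcal{L}_A+\mathcal{L}_B=\mathcal{L}_{A+B}$, $[\mathcal{L}_A,\mathcal{L}_B]=\mathcal{L}_{\{A,B\}}$, $[\mathcal{L}_A,[\mathcal{L}_A,\mathcal{L}_B]]=\mathcal{L}_{\{A,\{A,B\}\}}$, and so on. Collecting terms yields $D=\mathcal{L}_C$ with $C$ precisely the Poisson series in the statement, whence $e^{\mathcal{L}_A}\circ e^{\mathcal{L}_B}=e^{\mathcal{L}_C}$, as claimed. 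Note that injectivity of $G\mapsto\mathcal{L}_G$ is not required, since $C$ is simply \emph{defined} by the translated series.

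The main points needing care are twofold. The conceptual heart is the homomorphism identity, whose verification rests entirely on the Jacobi relation with the correct signs and orderings; this is where all the content sits, the rest being formal. The second point is that the BCH identity here should be read as a formal power-series statement: grading by the coupling constants $\alpha,\beta$ (the same grading under which the expansion \eqref{2.7} for $G$ was summed), each homogeneous degree involves only finitely many nested brackets, so the termwise passage from commutators to Poisson brackets is legitimate degree by degree and no analytic convergence question arises at this level.
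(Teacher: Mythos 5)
Your proposal is correct and follows essentially the same route as the paper: both reduce the claim to the operator BCH formula via the key identity $[\mathcal{L}_{A},\mathcal{L}_{B}]=\mathcal{L}_{\lbrace A,B\rbrace}$, proved from the Jacobi identity. The only cosmetic difference is that the paper verifies the translation of nested commutators into nested Poisson brackets by an explicit induction on the depth of nesting, whereas you package the same fact as the statement that $G\mapsto\mathcal{L}_{G}$ is a Lie algebra homomorphism.
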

\begin{proof}
From the BCH formula, we have
\begin{equation*}
{\rm e}^{\mathcal{L}_{A}}\circ
{\rm e}^{\mathcal{L}_{B}}={\rm e}^{ ( \mathcal{L}_{A}+\mathcal{L}_{B}+\frac{1}{2} [ \mathcal{L}_{A},\mathcal{L}_{B}
 ] +\frac{1}{12} [\mathcal {L}_{A}, [ \mathcal{L}_{A}, \mathcal{L}_{B} ]
 ]-\frac{1}{12} [ \mathcal{L}_{B}, [ \mathcal{L}_{A}, \mathcal{L}_{B} ] ] +\cdots
 ) },
\end{equation*}
which implies that in order to prove \eqref{3} is sufficient to prove that the relation
\begin{equation}\label{3.1}
 [ \mathcal{L}_{A_{1}},\ldots [ \mathcal{L}_{A_{n-1}}, [ \mathcal{L}_{A_{n}},\mathcal{L}_{B} ] ] \ldots ] =\mathcal{L}_ { ( \lbrace A_{1},\ldots \lbrace A_{n-1}, \lbrace A_{n},B \rbrace \rbrace \ldots \rbrace ) }
\end{equation}
holds.
We prove \eqref{3.1} by induction on $n$. For $n=1$, using the Jacobi identity, we get
\begin{align*}
 ( [ \mathcal{L}_{A_{1}},\mathcal{L}_{B}
 ] ) ( F(x,p) ) & = ( \mathcal{L}_{A_{1}}\mathcal{L}_{B}- \mathcal{L}_{B}\mathcal{L}_{A_{1}} ) ( F(x,p) ) = \lbrace A_{1}, \lbrace B,F \rbrace \rbrace - \lbrace B, \lbrace A_{1},F \rbrace \rbrace\\
& = \lbrace \lbrace A_{1}, B \rbrace ,F \rbrace =\mathcal{L}_{ ( \lbrace A_{1}, B \rbrace ) } ( F(x,p) ).
\end{align*}
Let us assume that the relation \eqref{3.1} holds for some $n>1$. Then by the induction assumption and using the Jacobi identity, we have
\begin{gather*}
( [ \mathcal{L}_{A_{1}},\ldots [ \mathcal{L}_{A_{n}}, [ \mathcal{L}_{A_{n+1}},\mathcal{L}_{B} ] ] \ldots ] ) ( F(x,p) )\\
\qquad{} = ( A_{1} [ \mathcal{L}_{A_{2}},\ldots [ \mathcal{L}_{A_{n}}, [ \mathcal{L}_{A_{n+1}},\mathcal{L}_{B} ] ] \ldots ]
 - [ \mathcal{L}_{A_{2}},\ldots [ \mathcal{L}_{A_{n}}, [ \mathcal{L}_{A_{n+1}},\mathcal{L}_{B} ] ] \ldots ]A_{1} ) ( F(x,p) )\\
\qquad{} = \lbrace A_{1}, \lbrace \lbrace A_{2},\ldots \lbrace A_{n}, \lbrace A_{n+1},B \rbrace \rbrace \ldots \rbrace, F \rbrace \rbrace\!
 -\! \lbrace \lbrace A_{2},\ldots \lbrace A_{n}, \lbrace A_{n+1},B \rbrace \rbrace \ldots \rbrace, \lbrace A_{1}, F \rbrace \rbrace \\
\qquad{} = \lbrace \lbrace A_{1}, \lbrace A_{2},\ldots \lbrace A_{n}, \lbrace A_{n+1},B \rbrace \rbrace \ldots \rbrace \rbrace ,F \rbrace
= \mathcal{L}_ { ( \lbrace A_{1},\ldots \lbrace A_{n}, \lbrace A_{n+1},B \rbrace \rbrace \ldots \rbrace ) } ( F(x,p) ).
\tag*{\qed}
\end{gather*}
\renewcommand{\qed}{}
\end{proof}

Let us denote ${\rm e}^{\mathcal{L}_{\tilde{G}}}={\rm e}^{\varphi\hat{\mathcal{L}}}\circ
{\rm e}^{\mathcal{L}_{G}}$. Then in the first order we get $\mathcal{L}_{\tilde{G}}=\varphi \hat{\mathcal{L}}+ \mathcal{L}_{G}+\frac{1}{2}\varphi \bigl\lbrace \hat{\mathcal{L}}, \mathcal{L}_{G}\bigr\rbrace \pm \cdots$. The most general realizations of $\tilde{X}_{\mu}$, $\tilde{P}_{\mu}$ and $h$ are obtained using the group of automorphisms applied to the special solution $\tilde{X}_{\mu}^{(1)}$, $\tilde{P}_{\mu}^{(0)}$ and $h^{(1,0)}$, namely $\tilde{X}_{\mu}=O_{F}\bigl( \tilde{X}_{\mu}^{(1)}\bigr)$, $\tilde{P}_{\mu}=O_{F}\bigl( \tilde{P}_{\mu}^{(0)}\bigr)$ and $h=O_{F}\bigl( h^{(1,0)}\bigr)$, where $O_{F} ={\rm e}^{\mathcal{L}_{F}}$ and $F$ is arbitrary function of $\alpha^{2}x^{2}$, $\beta^{2}p^{2}$ and $z$. Alternatively, we can write
\begin{gather*}
\tilde{X}_{\mu}=A\bigl( {\rm e}^{\varphi\hat{\mathcal{L}}}\circ
{\rm e}^{\mathcal{L}_{F}}\bigr) \bigl( \hat{x}_{\mu}^{(1)}\bigr) =A\bigl( {\rm e}^{\varphi\hat{\mathcal{L}}}\circ
{\rm e}^{\mathcal{L}_{F}}\bigr) \bigl( \sqrt{1-\beta^{2}p^{2}+z^{2}}x_{\mu}\bigr),
\\
\tilde{P}_{\mu}=B\bigl( {\rm e}^{-\psi\hat{\mathcal{L}}}\circ
{\rm e}^{\mathcal{L}_{F}}\bigr) \bigl( \hat{p}_{\mu}^{(0)}\bigr) =B\bigl( {\rm e}^{-\psi\hat{\mathcal{L}}}\circ
{\rm e}^{\mathcal{L}_{F}}\bigr) \bigl( \sqrt{1-\alpha^{2}x^{2}}p_{\mu}\bigr).
\end{gather*}
These solutions satisfy the above seven differential equations \eqref{e-1}--\eqref{e-7}.
\begin{Remark}
Note that the Poisson brackets are covariant under the action of ${\rm e}^{\mathcal{L}_{F}}$ i.e.~they satisfy
\begin{equation*}
 \bigl\lbrace \bigl( {\rm e}^{\mathcal{L}_{F}} \bigr)(f), \bigl( {\rm e}^{\mathcal{L}_{F}} \bigr)(g) \bigr\rbrace = \bigl( {\rm e}^{\mathcal{L}_{F}} \bigr) ( \lbrace f, g \rbrace )
\end{equation*}
for any function $F$, $f$, $g$.
Also $ \bigl( {\rm e}^{\mathcal{L}_{F}}\bigr)\bigl( M_{\mu\nu}\bigr) =M_{\mu\nu}$ if $\bigl\lbrace F, M_{\mu\nu} \bigr\rbrace =0$ and $ \bigl( {\rm e}^{\mathcal{L}_{F}}\bigr)\bigl( \eta_{\mu\nu}\bigr) =\eta_{\mu\nu}$.
\end{Remark}

\begin{Remark}
Using realizations of the Snyder model \cite{Martinic-2}, we can
write the corresponding realizations of the Snyder Poisson model. From the results that were found in \cite{Martinic-2}, we get
\begin{gather*}
\hat{x}_{\mu}= \bigl( {\rm e}^{\mathcal{L}_{G}} \bigr) \bigl( x_{\mu}+\beta^{2}(x\cdot p)p_{\mu} \bigr)
=x_{\mu}\varphi_{1}(u)+\beta^{2}(x\cdot p)p_{\mu}\varphi_{2}(u),
\\
\hat{p}_{\mu}=\bigl( {\rm e}^{\mathcal{L}_{G}}\bigr)\bigl( p_{\mu}\bigr) =p_{\mu}\frac{1}{\varphi_{1}(u)},
\end{gather*}
where
\begin{gather*}
G=(x\cdot p)F(u), \qquad \varphi_{2}(u)=\frac{1+\dot{\varphi}_{1}(u)\varphi_{1}(u)}{\varphi_{1}(u)-2u\dot{\varphi}_{1}(u)}, \qquad \dot{\varphi}_{1}=\frac{{\rm d}\varphi_{1}(u)}{{\rm d}u} \qquad \text{and} \qquad u=\beta^{2}p^{2}.
\end{gather*}
\end{Remark}

\section{Related Poisson models}\label{sec-4}

In this section, we introduce related Poisson algebras generalizing \eqref{0.3}, \eqref{0.4} and \eqref{1.7}:
\begin{gather*}
\bigl\lbrace \hat{x}_{\nu},\hat{x}_{\nu}\bigr\rbrace =\epsilon_{1}\beta^{2}M_{\mu\nu}, \qquad \bigl\lbrace \hat{p}_{\nu},\hat{p}_{\nu}\bigr\rbrace =\epsilon_{2}\alpha^{2}M_{\mu\nu},
\qquad
\bigl\lbrace \hat{x}_{\mu}, \hat{p}_{\nu}\bigr\rbrace =\eta_{\mu\nu}h,
\\
\bigl\lbrace h, \hat{x}_{\mu}\bigr\rbrace =\epsilon_{1}\beta^{2}\hat{p}_{\mu}, \qquad \bigl\lbrace h, \hat{p}_{\mu}\bigr\rbrace =-\epsilon_{2}\alpha^{2}\hat{x}_{\mu},
\end{gather*}
where $\epsilon_{1}^{2}=\epsilon_{2}^{2}=1$. The Yang Poisson model in
Section \ref{sec-1} corresponds to the $\epsilon_{1}=\epsilon_{2}=1$ case which is related to $\mathfrak{o}(1,5)$ algebra.

Now, let us consider the \textit{case when $\epsilon_{1}=\epsilon_{2}=-1$}.
A special solution for realizations of~$\hat{x}_{\mu}$ and~$\hat{p}_{\mu}$ in this case is given as
\begin{gather*}
\hat{x}_{\mu}=x_{\mu}\sqrt{1+\beta^{2}p^{2}+\phi_{1}(z)},
\qquad
\hat{p}_{\mu}=p_{\mu}\sqrt{1+\alpha^{2}x^{2}+\phi_{2}(z)},
\end{gather*}
where
\begin{gather*}
\phi_{1}\phi_{2}+\phi_{1}+\phi_{2}=z^{2}, \qquad z=\alpha\beta ( xp),
\qquad
h=\sqrt{\bigl( 1+\beta^{2}p^{2}+\phi_{1}(z)\bigr) \bigl( 1+\alpha^{2}x^{2}+\phi_{2}(z)\bigr) }.
\end{gather*}
The new generators $\tilde{X}_{\mu}$ and $\tilde{P}_{\mu}$, linear in $\hat{x}_{\mu}$, $\hat{p}_{\mu}$ and $M_{\mu\nu}$, are given in \eqref{x-2} and \eqref{p-2}. Also, a~new algebra, related to $\mathfrak{o}(3,3)$ algebra, generated by $\tilde{X}_{\mu}$, $\tilde{P}_{\mu}$,
$M_{\mu\nu}$ and $\tilde{h}$ is given in \eqref{1.10}--\eqref{1.16}, where
\begin{equation*}
\tilde{\rho}=-\sin(\varphi +\psi)+\frac{ab}{AB}, \qquad \tilde{A}=-A^{2}+a^{2} \qquad \text{and} \qquad \tilde{B}=-B^{2}+b^{2}.
\end{equation*}

In the \textit{case when $\epsilon_{1}=1$ and $\epsilon_{2}=-1$}, a special solution is given by
\begin{gather*}
\hat{x}_{\mu}=x_{\mu}\sqrt{1-\beta^{2}p^{2}+\phi_{1}(z)},
\qquad
\hat{p}_{\mu}=p_{\mu}\sqrt{1+\alpha^{2}x^{2}+\phi_{2}(z)},
\end{gather*}
where
\begin{equation*}
\phi_{1}\phi_{2}+\phi_{1}+\phi_{2}=-z^{2}, \qquad z=\alpha\beta ( xp),
\qquad
h=\sqrt{\bigl( 1-\beta^{2}p^{2}+\phi_{1}(z)\bigr) \bigl( 1+\alpha^{2}x^{2}+\phi_{2}(z)\bigr) }.
\end{equation*}
The new generators $\tilde{X}_{\mu}$ and $\tilde{P}_{\mu}$ are given by
\begin{gather}\label{x-4}
\tilde{X}_{\mu}=A\biggl( \cosh \varphi \hat{x}_{\mu}+\frac{\beta}{\alpha}\sinh \varphi \hat{p}_{\mu}\biggr) +\beta a_{\nu}M_{\mu\nu},
\\ \label{p-4}
\tilde{P}_{\mu}=B\biggl( \cosh \psi \hat{p}_{\mu}+\frac{\alpha}{\beta}\sinh \psi \hat{x}_{\mu}\biggr) +\alpha b_{\nu}M_{\mu\nu},
\end{gather}
where $M_{\mu\nu}=x_{\mu}p_{\nu}-x_{\nu}p_{\mu}$ and $A$, $B$, $\varphi$, $\psi$, $a_{\mu}$, $b_{\mu}$ are dimensionless parameters with $AB\neq 0$. New generators $\tilde{X}_{\mu}$ and $\tilde{P}_{\mu}$ together with
$ M_{\mu\nu}$ and $\tilde{h}$ generate a new algebra, isomorphic to the initial Yang Poisson algebra, which is given in \eqref{1.10}--\eqref{2.1}, where
\begin{gather*}
\tilde{h}=AB\cosh(\psi-\varphi)h+\beta a\tilde{P}-\alpha b \tilde{X}-\alpha\beta a_{\mu}b_{\nu}M_{\mu\nu},
\\
\tilde{\rho}=\sinh(\psi-\varphi)+\frac{ab}{AB}, \qquad \tilde{A}=A^{2}+a^{2} \qquad \text{and} \qquad \tilde{B}=-B^{2}+b^{2}.
\end{gather*}

Finally, in the last \textit{case when $\epsilon_{1}=-1$ and $\epsilon_{2}=1$} a special solution is given by
\begin{gather*}
\hat{x}_{\mu}=x_{\mu}\sqrt{1+\beta^{2}p^{2}+\phi_{1}(z)},
\qquad
\hat{p}_{\mu}=p_{\mu}\sqrt{1-\alpha^{2}x^{2}+\phi_{2}(z)},
\end{gather*}
where
\begin{equation*}
\phi_{1}\phi_{2}+\phi_{1}+\phi_{2}=-z^{2}, \qquad z=\alpha\beta ( xp),\qquad
h=\sqrt{\bigl( 1+\beta^{2}p^{2}+\phi_{1}(z)\bigr) \bigl( 1-\alpha^{2}x^{2}+\phi_{2}(z)\bigr) }.
\end{equation*}
The new generators $\tilde{X}_{\mu}$ and $\tilde{P}_{\mu}$ are given in
\eqref{x-4} and \eqref{p-4} and a new algebra generated by them together with $M_{\mu\nu}$ and $\tilde{h}$
is given in \eqref{1.10}--\eqref{2.1}, where
\begin{gather*}
\tilde{h}=AB\cosh(\psi-\varphi)h+\beta a\tilde{P}-\alpha b \tilde{X}-\alpha\beta a_{\mu}b_{\nu}M_{\mu\nu},
\\
\tilde{\rho}=-\sinh(\psi-\varphi)+\frac{ab}{AB}, \qquad \tilde{A}=-A^{2}+a^{2} \qquad \text{and} \qquad \tilde{B}=B^{2}+b^{2}.
\end{gather*}

Note that the algebras generated with cases when $\epsilon_{1}=1$, $\epsilon_{2}=-1$ and $\epsilon_{1}=-1$, $\epsilon_{2}=1$ are related to $\mathfrak{o}(2,4)$ algebra.

\section{Discussion}

The generalized Yang models are examples of noncommutative geometry on a background spacetime of constant curvature that display a duality between position and momentum manifolds.

In this paper, we
have obtained the exact realizations of a generalized Yang Poisson algebra on a canonical phase space related to the $\mathfrak{o}(1,5)$ algebra. These realizations are simpler than in the quantum case. The results we have obtained can be considered as a limit of the quantum formalism for $\hbar \rightarrow 0,$ presenting a classical approximation of the quantum realizations.

Moreover, we have discussed the Poisson algebras related to $\mathfrak{o}(3,3)$ and $\mathfrak{o}(2,4)$ algebras. These models correspond to different physical settings, namely, the case $\epsilon_2=-1$ is related to the symmetries
of anti-de Sitter spacetime, while $\epsilon_1=-1$ corresponds to the so-called anti-Snyder algebra \cite{Mignemi-2}. We recall that the anti-Snyder algebra enjoys rather different properties from the Snyder algebra, in particular concerning the existence of a maximum allowable momentum.

Possible applications of our results are in cosmology, since the present model could be useful in describing effects of noncommutativity in early stages of inflation, and in the investigation of modifications of the dynamics of simple mechanical systems caused by the deformed symplectic structure.
The most elementary example is the harmonic oscillator, that has been studied in~\cite{Meljanac-4} in the $\mathfrak{o}(1,5)$ case, exhibiting a modification with respect to the canonical solution, with the period that becomes energy dependent.
 In the $\mathfrak{o}(3,3)$ and $\mathfrak{o}(2,4)$ cases, one expect similar modifications, analogous to those found in \cite{Mignemi-3} for the related TSR setting.

In the present paper, a canonical phase space is defined with coordinates~$x_{\mu}$, momenta $p_{\mu}$ and Lorentz generators $M_{\mu \nu} = x_{\mu} p_{\nu} - x_{\nu} p_{\mu}$. However, it is possible to define the extended coordinates~$x_{\mu \nu}$ with corresponding momenta $p_{\mu \nu}$ and to interpret the Lorentz generators $M_{\mu \nu}$ as extended coordinates~$x_{\mu \nu}$ and $\tilde{h}$ as an additional scalar coordinate. In this framework, the algebra \eqref{1.10}--\eqref{1.17} is generated by 15 coordinates. The corresponding realizations can be obtained from the quantum realizations of the generalized Yang models, presented in \cite{Lukierski, Martinic-3}.

\subsection*{Acknowledgement}

S.~Mignemi thanks Gruppo Nazionale di Fisica Matematica for support.

\pdfbookmark[1]{References}{ref}
\LastPageEnding


\begin{thebibliography}{99}
\footnotesize\itemsep=0pt

\bibitem{Banerjee}
Banerjee R., Kumar K., Roychowdhury D., Symmetries of {S}nyder--de {S}itter
 space and relativistic particle dynamics, \href{https://doi.org/10.1007/JHEP03(2011)060}{\textit{J.~High Energy Phys.}}
 \textbf{2011} (2011), no.~3, 060, 14~pages, \href{https://arxiv.org/abs/1101.2021}{arXiv:1101.2021}.

\bibitem{Born}
Born M., Reciprocity theory of elementary particles, \href{https://link.aps.org/doi/10.1103/RevModPhys.21.463}{\textit{Rev. Mod. Phys.}}
 \textbf{21} (1949), 463--473.

\bibitem{Guo}
Guo H.Y., Huang C.G., Wu H.T., Yang's model as triply special relativity and
 the {S}nyder's model--de {S}itter special relativity duality, \href{https://doi.org/10.1016/j.physletb.2008.04.012}{\textit{Phys.
 Lett.~B}} \textbf{663} (2008), 270--274, \href{https://arxiv.org/abs/0801.1146}{arXiv:0801.1146}.

\bibitem{Khruschev}
Khrushchev V.V., Leznov A.N., Relativistically invariant {L}ie algebras for
 kinematic observables in quantum space-time, \textit{Gravit. Cosmol.}
 \textbf{9} (2003), 159--162, \href{https://arxiv.org/abs/hep-th/0207082}{arXiv:hep-th/0207082}.

\bibitem{Kowalski}
Kowalski-Glikman J., Smolin L., Triply special relativity, \href{https://doi.org/10.1103/PhysRevD.70.065020}{\textit{Phys.
 Rev.~D}} \textbf{70} (2004), 065020, 6~pages, \href{https://arxiv.org/abs/hep-th/0406276}{arXiv:hep-th/0406276}.

\bibitem{Lukierski}
Lukierski J., Meljanac S., Mignemi S., Pacho{\l} A., Quantum perturbative
 solutions of extended {S}nyder and {Y}ang models with spontaneous symmetry
 breaking, \href{https://doi.org/10.1016/j.physletb.2023.138261}{\textit{Phys. Lett.~B}} \textbf{847} (2023), 138261, 6~pages,
 \href{https://arxiv.org/abs/2212.02316}{arXiv:2212.02316}.

\bibitem{Martinic-2}
Martini\'c~Bila\'c T., Meljanac S., Realizations of the extended {S}nyder
 model, \href{https://doi.org/10.3842/SIGMA.2023.065}{\textit{SIGMA}} \textbf{19} (2023), 065, 11~pages, \href{https://arxiv.org/abs/2306.10609}{arXiv:2306.10609}.



\bibitem{Martinic-1}
Martini\'c~Bila\'c T., Meljanac S., Mignemi S., Hermitian realizations of the
 {Y}ang model, \href{https://doi.org/10.1063/5.0157268}{\textit{J.~Math. Phys.}} \textbf{64} (2023), 122302, 9~pages,
 \href{https://arxiv.org/abs/2305.04013}{arXiv:2305.04013}.

\bibitem{Martinic-3}
Martini\'c~Bila\'c T., Meljanac S., Mignemi S., Realizations and star-product of doubly $\kappa$-deformed {Y}ang models, \href{https://arxiv.org/abs/2404.01792}{arXiv:2404.01792}.

\bibitem{Meljanac-1}
Meljanac S., Mignemi S., Generalizations of {S}nyder model to curved spaces,
 \href{https://doi.org/10.1016/j.physletb.2022.137289}{\textit{Phys. Lett.~B}} \textbf{833} (2022), 137289, 6~pages,
 \href{https://arxiv.org/abs/2206.04772}{arXiv:2206.04772}.

\bibitem{Meljanac-2}
Meljanac S., Mignemi S., Noncommutative {Y}ang model and its generalizations,
 \href{https://doi.org/10.1063/5.0135492}{\textit{J.~Math. Phys.}} \textbf{64} (2023), 023505, 9~pages,
 \href{https://arxiv.org/abs/2211.11755}{arXiv:2211.11755}.

\bibitem{Meljanac-4}
Meljanac S., Mignemi S., Realizations of the {Y}ang--{P}oisson model on
 canonical phase space, \href{https://doi.org/10.1142/S0217751X23501828}{\textit{Internat. J.~Modern Phys.~A}} \textbf{38}
 (2023), 2350182, 10~pages, \href{https://arxiv.org/abs/2307.09828}{arXiv:2307.09828}.

\bibitem{Meljanac-3}
Meljanac S., \v{S}trajn R., Deformed quantum phase spaces, realizations, star
 products and twists, \href{https://doi.org/10.3842/SIGMA.2022.022}{\textit{SIGMA}} \textbf{18} (2022), 022, 20~pages,
 \href{https://arxiv.org/abs/2112.12038}{arXiv:2112.12038}.

\bibitem{Mignemi-1}
Mignemi S., The {S}nyder--de {S}itter model from six dimensions,
 \href{https://doi.org/10.1088/0264-9381/26/24/245020}{\textit{Classical Quantum Gravity}} \textbf{26} (2009), 245020, 9~pages,
 \href{https://arxiv.org/abs/0807.2186}{arXiv:0807.2186}.

\bibitem{Mignemi-2}
Mignemi S., Classical and quantum mechanics of the nonrelativistic {S}nyder
 model, \href{https://doi.org/10.1103/PhysRevD.84.025021}{\textit{Phys. Rev.~D}} \textbf{84} (2011), 025021, 11~pages,
 \href{https://arxiv.org/abs/1104.0490}{arXiv:1104.0490}.

\bibitem{Mignemi-3}
Mignemi S., Classical and quantum mechanics of the nonrelativistic {S}nyder
 model in curved space, \href{https://doi.org/10.1088/0264-9381/29/21/215019}{\textit{Classical Quantum Gravity}} \textbf{29} (2012),
 215019, 19~pages, \href{https://arxiv.org/abs/1110.0201}{arXiv:1110.0201}.

\bibitem{Snyder}
Snyder H.S., Quantized space-time, \href{https://doi.org/10.1103/PhysRev.71.38}{\textit{Phys. Rev.}} \textbf{71} (1947),
 38--41.

\bibitem{yang}
Yang C.N., On quantized space-time, \href{https://doi.org/10.1103/PhysRev.72.874}{\textit{Phys. Rev.}} \textbf{72} (1947),
 874.

\end{thebibliography}
\end{document}